\newcommand{\eps}{\ensuremath{\epsilon}}
\newcommand{\ul}{\ensuremath{U_{\lambda}}}
\newcommand{\Ql}{\ensuremath{\mathcal{Q}_\lambda}}
\newcommand{\Vl}{\ensuremath{V_{\lambda}}}
\newcommand{\essinf}{\ensuremath{ \mathrm{ess}\,\inf}}
\newcommand{\esssup}{\ensuremath{ \mathrm{ess}\,\sup}}
\newcommand{\C}{\ensuremath{\mathcal{C}}}
\newcommand{\K}{\ensuremath{\mathcal{K}}}
\newcommand{\um}{\ensuremath{U_{\mu}}}
\newcommand{\Vm}{\ensuremath{V_{\mu}}}
\newtheoremstyle{myplain}
{}
{}
{\itshape}
{}        
{\scshape}
{}
{.5em}
{}
\newtheoremstyle{myremark}
{}
{}
{}
{}        
{\scshape}
{}
{.5em}
{}
\theoremstyle{myplain}
\newtheorem{lemma}{Lemma}[section]
\newtheorem{corollary}[lemma]{Corollary}
\newtheorem{definition}[lemma]{Definition}
\newtheorem{assumption}[lemma]{Assumption}
\newtheorem{theorem}[lemma]{Theorem}
\theoremstyle{remark}
\newtheorem{remark}[lemma]{Remark}
\newtheorem{example}[lemma]{Example}
\begin{document}
\title {The best gain-loss ratio is a poor performance measure}
\author {Sara Biagini \thanks{Corresponding author. University of Pisa,  Dipartimento di Economia e  Management, via Cosimo Ridolfi 10, 56100 Pisa, Italy. Email:  sara.biagini@ec.unipi.it, phone:+39050221649} \and Mustafa \c{C}. P{\i}nar \thanks{Bilkent University
Department of Industrial Engineering,
06800 Bilkent Ankara, Turkey. Email: mustafap@bilkent.edu.tr, phone: +903122901514}} \maketitle

\begin{abstract}
 The gain-loss ratio is known to enjoy  very good properties from a normative point of view.   As a confirmation,  we show that  the best market gain-loss ratio in the presence of a random endowment   is an acceptability index and we provide its dual representation for general probability spaces.\\
\indent However, the gain-loss ratio was designed for  finite $\Omega$, and works best in that case. For general $\Omega$  and in most continuous time   models, the best gain-loss is either infinite or fails to be attained. In addition, it displays  an odd behaviour   due to the scale invariance property, which does not seem desirable in this context.  Such  weaknesses  definitely prove that the (best) gain-loss is a \emph{poor} performance measure.\\

\noindent {\bf Key words:} Gain-loss ratio, acceptability indexes, incomplete markets, martingales, quasi concave optimization, duality methods,  market modified risk measures.

\noindent{\bf JEL:} G11, G12, G13.  {\bf MSC 2010:} 46N10, 91G99, 60H99.
\\
\noindent \textbf{Acknowledgements} We warmly thank  Jocelyne Bion-Nadal, Ale\v{s} \v{C}ern\'{y}, Marco Frittelli and Paolo Guasoni for their   valuable suggestions. Special thanks go two anonymous referees for their careful reading and  remarks,  which  substantially   improved  the quality of the paper.
\end{abstract}

\section{Introduction}
The gain-loss ratio  was introduced by Bernardo and Ledoit \cite{bl} to provide  an alternative  to the classic  Sharpe Ratio (SR) in portfolio performance evaluation.  Cochrane and Saa-Requejo  \cite{csr} call portfolios with  high SR 'good deals'. These opportunities should, informally speaking,  be regarded as  quasi-arbitrages and therefore should be ruled out.  Ruling out good deals, or equivalently   restricting   SR,  produces  in turn  restrictions on pricing kernels. Restricted pricing kernels are  desirable  since they provide   narrower lower and upper
price intervals for contingent claims in comparison to arbitrage-free
price intervals.
  This criterion is based on the assumption that a high SR is attractive, and a low SR is not.  The SR criterion  works well in a Gaussian returns context, but in general it does not  since it is  incompatible with no-arbitrage. In fact  a positive gain   with  finite first moment but infinite variance has zero SR,  but it is    very attractive as it is an arbitrage.
  The Sharpe Ratio (SR) has another   drawback:  it is not monotone, and
  thus  violates a basic axiom in theory of choice. To remedy the afore-mentioned shortcomings of the SR,
  Bernardo and Ledoit   proposed  as  performance measure  the gain-loss ratio:
$$ \alpha(X) = \frac{E[X^+]}{E[X^-]} $$
  where the expectation is taken under the historical probability measure $P$. The gain-loss ratio $\alpha$ is   well defined   on non-null payoffs $X$ as soon as  $X^+ $ or $X^-$  are integrable, it has an intuitive significance  and is  easy to compute. It also enjoys many  properties: monotonicity across $X$s; scale invariance, that is $\alpha(c  X ) = \alpha (X)$ for all $c >0$;   law invariance, as two payoffs with the same distribution have the same $\alpha$; and a classic continuity property (Fatou  property).     Restricted to portfolios with positive expectation, it becomes  a quasi concave map, consistent with second order stochastic dominance, as shown by  Cherny and Madan in \cite{cm}, and is thus an \emph{acceptability index} in their terminology.

Let  $\alpha^*$ denote the best  gain-loss ratio from  the  market, i.e. from the set  $\mathcal{X}$ of  non-trivial, \emph{discounted}, portfolio gains with finite first moment: $$ \alpha^*: = \sup_{X \in \mathcal{X}, X \neq 0} \alpha(X).$$
In case $P$ is already a pricing kernel,  $\alpha^*=1$ as $E[X]=E[X^+-X^-]=0$ for all gains.  This gives a flavor of
 the  main result  by Bernardo and Ledoit, which  is  the equivalence between
 \begin{itemize}
   \item[i)]  $\alpha^* <+\infty$,
   \item[ii)]  existence of   pricing kernels  with state price density $Z$ satisfying $ c\leq  Z\leq C$ for some constants $C,c>0$.
 \end{itemize}
 That is, restrictions on the best gain-loss ratio are equivalent to the existence of special, restricted pricing kernels bounded and bounded away from $0$.  Bernardo and Ledoit also prove a duality formula for $\alpha^*$,
 \[ \alpha^* = \min_{Z} \frac{\esssup Z}{\essinf Z} \]
 where $Z$ varies  over all the pricing kernels as in item ii) above.
 Though stated for a general probability space and in a biperiodal market model, Bernardo and Ledoit's derivation is correct only if $\Omega$ is finite. In fact,  what they actually show is $$ \alpha^*= \max_{X \in \mathcal{X}, X \neq 0} \alpha(X) =   \min_{Z} \frac{\esssup Z}{\essinf Z},$$
 i.e. that the  best ratio is always attained. This is true only if $\Omega$ is finite. \\
\indent  Against this background, the present paper develops an
analysis of the gain-loss ratio for general probability spaces. The rest of the paper is organized as follows. In Section 2   we  show the above equivalence  i) $\iff$ ii)   in  the presence of a  continuous time market for general $\Omega$. The duality technique employed here  extends  also  Pinar's treatment  \cite{p1,p2}.  The  assumptions made on the market model are quite general, as we do not require  the    underlyings process $S$ to be  neither a continuous diffusion, nor locally bounded. \\
\indent The duality formula for $\alpha^*$  is correctly reformulated as $\sup \cdots = \min \cdots $  in Theorem \ref{ftap-simple}, and    a  simple counterexample where the  supremum $\alpha^*$, though finite,  is not attained is provided in the Examples  Section 2.4. \\
 \indent
  In Section 2.3 pros and cons of the best gain-loss ratio are discussed.  While in discrete time models there is a full characterization of models with finite best gain-loss ratio, in continuous time the situation is hopeless. In most commonly used models, $\alpha^* = +\infty$ as any pricing kernel is unbounded as shown in details  for the Black Scholes model in Example \ref{bs-model}.   Finally, in Section 3 we analyze the best gain-loss ratio $\alpha^*(B)$  in the presence of a random endowment $B$.  In Section 3.1    $\alpha^*(B)$  is shown to be an acceptability index on integrable payoffs,  according to the definition given by Biagini and Bion-Nadal \cite{bb}. There we   briefly highlight   the difference between  the   notions  of acceptability index as given in \cite{cm} and  \cite{bb}, and we motivate the reason why the choice made by \cite{bb} is preferable here.   Then, in Section 3.2 we prove an extension of   Theorem \ref{ftap-simple} in the presence of $B$ and we provide a dual representation for $\alpha^*(B)$.   Section 3.3 concludes  by   pointing out other  gain-loss drawbacks when an endowment is present, which prove that the (best) gain-loss is a poor performance measure.

\section{The market best gain-loss $\alpha^*$ and its dual representation}

\subsection{The market  model}
Let $(\Omega, (\mathcal{F}_{t})_{t \in [0,T]}, P)$ be a continuous time stochastic basis satisfying the usual assumptions.  $S$ is an $\mathbb{R}^d$-valued semimartingale  on this basis and models the (discounted) time evolution of $d$  underlyings up to the finite horizon $T$.  A strategy $\xi$ is predictable, $S$-integrable process and     the stochastic integral $ \xi \cdot S$ is  the corresponding  gain process. Now, some integrability condition must be imposed on $S$ in order to ensure the presence of strategies $\xi$ with well defined gain-loss ratio. In some  cases in fact it may   happen   that every non-null terminal  gain $K = \xi\cdot S_T$  verifies $E[K^+]=E[K^-]=+\infty$, see  the Examples Section   for a simple one period model of such an extreme situation. \\
\indent The following is thus the integrability assumption on $S$ which holds throughout the paper.
\begin{assumption} \label{ass-S}
 Let $S^*_T = \sup_{t\leq T} |S_t|$ denote the maximal functional at   $T$. Then $S^*_T\in L^1(P)$.
\end{assumption}

Note that  $S^*_T$ coincides with the running maximum at the terminal date $T$ if $S$ is non-negative.   This assumption  is verified in many models used in practice:
\begin{itemize}
  \item if time is discrete, with finite horizon, or  equivalently: $S$ is a pure jump process  with jumps occurring only at fixed dates $t_1, \ldots, t_n$,  the assumption is equivalent to  $S_{t_i} \in L^1(P)$ for all $t_i$;

  \item if $S$ is a L\'{e}vy process, the assumption is equivalent to the integrability of $S_T$ only (or of $S_t$ at any fixed $0<t\leq T)$. This is a particular case  of a more general result on moments of  L\'{e}vy process, see reference \cite[Section 5.25]{sato}  (specifically Theorem 5.25.18).
\end{itemize}

 Therefore,  at least in normal market conditions Assumption \ref{ass-S} is quite reasonable.  From a strict  mathematical perspective it ensures that the gains processes are true (and not local) martingales under bounded  pricing kernels.   The  admissible strategies  we consider  are the linear  space
 $ \Xi =\{ \xi \mid \xi  \text{ is simple, predictable and bounded}  \}$, i.e.   those $\xi $ which may be written as $\sum_{i=1}^{n-1} H_i \mathbbm{1}_{  ] \tau_{i}, \tau_{i+1} ] }$ for some stopping times $0\leq \tau_1 <  \ldots < \tau_n \leq T $ with $H_i$ bounded and $ \mathcal{F}_{\tau_{i}}$-measurable. These  strategies   represent the set of  buy-and-hold strategies on $S$ over
finitely many trading dates.  The set of terminal admissible gains, which are replicable at zero cost via a simple strategy,  is  thus  the linear space
 $$\mathcal{K} = \{ K \mid K=\xi\cdot S_T \text{ for some  } \xi \in \Xi\}.  $$
 Thanks to Assumption \ref{ass-S},  $\mathcal{K}\subseteq  L^1(P)$.  Note   that  $ \xi = \mathbbm{1}_A \mathbbm{1}_{]s,t]} $  and its opposite $-\xi$  are in $\Xi$ for all $A \in \mathcal{F}_s$ and for all $0\leq s<t\leq T$, so that $K = \mathbbm{1}_A (S_t-S_s) $ and $-K$ are in $ \mathcal{K}$.\\
 \indent The  best gain-loss  in the above market is  then
  $$\alpha^* := \sup_{K \in \K, K\neq 0 } \alpha(K). $$
  The best gain-loss $\alpha^*$ is always greater or equal to $1$, and it is equal to $1$ if and only if $P$ is already a martingale measure for $S$. These facts can be easily  proved, using the linearity of $\K$ and the  above observation: $ \pm \mathbbm{1}_A \mathbbm{1}_{]s,t]}  \in \Xi$.

\subsection{No $\lambda$ gain-loss, its dual characterization and the duality formula for $\alpha^*$ }

The market best gain-loss $\alpha^*$ is  the value of a  non-standard  optimization problem.   In fact, the gain-loss ratio $\alpha$ is  not concave, and not even   quasi concave on $L^1(P)$. However,  when restricted to variables with non-negative  expectation  it becomes quasi-concave, as shown in detail by \cite{cm}. Since  the optimization can be restricted to  gains with non-negative expectations without loss of generality, in the end $\alpha^*$ can be seen as the optimal value of a quasi concave problem.\\
  \indent To characterize $\alpha^*$ and to link it to a no-arbitrage type result,  we rely on  a parametric family of  auxiliary  utility maximization problems   with piecewise linear  utility $\ul$:
$$ \ul(x) = x^+-\lambda x^-, \ \  {\lambda\geq 1}. $$
The convex conjugate of $\ul$, $\Vl(y) = \sup_{x} (\ul(x)-xy)$  is   the functional   indicator   of the interval $[1, \lambda]$:   $$\Vl (y) = \left \{\begin{array}{cc}
                             0 &  \text{ if }  1 \leq y\leq \lambda \\
                             +\infty & \text{ otherwise.}
                           \end{array}
          \right.  $$

  By mere definition of the conjugate, the   Fenchel inequality holds:
   \begin{equation}\label{fenchel}
    \ul(x) -xy \leq \Vl(y)\ \ \  \text{ for all } x, y \in \mathbb{R}.
    \end{equation}

\begin{definition}
Fix $\lambda \in [1, +\infty)$. Then the  set of probabilities  $\Ql$ which  have finite $\Vl$ entropy  is:
 $$\mathcal{Q}_\lambda := \{Q \;\mbox{probab.}, Q\ll P \; \mid \exists y >0,   E[\Vl(y\frac{dQ}{dP})]<+\infty \}.$$

\end{definition}

\begin{remark}
The set $\Ql$ is  not empty, as     $ \mathcal{Q}_1=\{P\}$ and   $ P \in \Ql $ for all $\lambda \geq 1$. It is also easy to check that $\Ql$ is convex and  the family  $ (\Ql)_{\lambda\geq 1}$ is non-decreasing in the parameter.  With the usual convention $\frac{c}{0} =+\infty $ for $c>0$,   $\mathcal{Q}_\lambda = \{Q \;\mbox{probab.}, Q\ll P \; \mid \frac{\esssup \frac{dQ}{dP}}{ \essinf \frac{dQ}{dP}  } \leq \lambda\} $.
\end{remark}
The  next definition is understood as follows.  The market is gain-loss free at a certain level  ${\lambda >1 }$ if not only there is no gain with $\alpha \geq \lambda$, but also  $\lambda$ cannot be approximated arbitrarily well with gains in $\K$.

\begin{definition}
For a given $\lambda \in (1, +\infty)$,   the market is    $\lambda$ gain-loss free  if  $\alpha^*<\lambda $.
\end{definition}

Theorem \ref{ftap-simple} below, first shown by Bernardo and Ledoit in a  two periods setup,    states   the equivalence between absence of    $\lambda$ gain-losses  and existence of a martingale measure whose density satisfies  precise bounds.\\
 \indent Some notation first. Let $\C= \{ X \in L^1 \mid X\leq K  \text{ for some } K\in  \mathcal{K} \}$  denote  the set (convex cone) of claims which are  super replicable at zero cost, and consider its polar set $\C^0 =\{ Z \in L^{\infty} \mid E[ZX]\leq 0 \text{ for all }  X\in \C\}$. As $\C\supseteq -L^1_+$, $\C^0\subseteq L^{\infty}_+$.
  $\C^0$ is a convex cone and thus not empty as $0\in \C^0$. \\
  \indent However, $\C^0$ may be trivially $\{0\}$, i.e. its basis $\C^0_1=\{ Z \in \C^0 \mid E[Z]=1 \}$ may be empty. This  may  happen  in common models  such as the Black Scholes model, see Remark \ref{bs-unbounded} and Example \ref{bs-model} for a discussion and more details.
  The basis  $\C_1^0$   however is important for   gain-loss analysis.
  The following  Lemma in fact proves that   $\C^0_1$ is the set of bounded  martingale probability densities, which in turn appear in the characterization of the  market best gain-loss in Theorem \ref{ftap-simple}.
  \begin{lemma} \label{mart}
   $Z\in \C_1^0$ if and only if it  is  a bounded  martingale density.
  \end{lemma}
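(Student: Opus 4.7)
The plan is to prove the two implications separately, exploiting Assumption \ref{ass-S} to pass between $P$- and $Q$-integrability whenever $Z$ is bounded.

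For the forward direction, suppose $Z \in \mathcal{C}_1^0$. Since $\mathcal{C}^0 \subseteq L^\infty_+$ and $E[Z]=1$, the density $Z$ defines a probability $Q \ll P$ with $Q$ absolutely continuous and $\frac{dQ}{dP} \in L^\infty$. To verify the martingale property, I would use the observation (already made in the excerpt) that for every $0 \leq s < t \leq T$ and every $A \in \mathcal{F}_s$, both $\mathbbm{1}_A(S_t - S_s)$ and its negative belong to $\mathcal{K} \subseteq \mathcal{C}$. The defining inequality $E[ZX] \leq 0$ applied to both signs yields $E[Z\,\mathbbm{1}_A(S_t - S_s)] = 0$, i.e.\ $E_Q[S_t\mid \mathcal{F}_s] = S_s$. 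Integrability $S_t \in L^1(Q)$ is immediate from Assumption \ref{ass-S} and boundedness of $Z$: $E_Q[|S_t|] \leq \|Z\|_\infty E[S^*_T] < +\infty$.

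For the reverse direction, suppose $Z$ is a bounded martingale density. I need $E[ZX]\leq 0$ for every $X \in \mathcal{C}$. Fix $X \leq K$ with $K = \xi\cdot S_T$, $\xi = \sum_{i=1}^{n-1} H_i\mathbbm{1}_{]\tau_i,\tau_{i+1}]} \in \Xi$. Since $Z \geq 0$ and $K - X \geq 0$, it suffices to prove $E[ZK] = 0$, i.e.\ $E_Q[K] = 0$. Under $Q$, the process $S$ is a martingale by hypothesis, and boundedness of $Z$ together with Assumption \ref{ass-S} gives $E_Q[S^*_T] \leq \|Z\|_\infty E[S^*_T] < +\infty$; hence the family $(S_\tau)_{\tau\leq T}$ is uniformly $Q$-integrable and the optional sampling theorem applies at the bounded stopping times $\tau_i \leq \tau_{i+1}\leq T$. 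Since $H_i$ is bounded and $\mathcal{F}_{\tau_i}$-measurable, $E_Q[H_i(S_{\tau_{i+1}}-S_{\tau_i})] = E_Q[H_i E_Q[S_{\tau_{i+1}}-S_{\tau_i}\mid \mathcal{F}_{\tau_i}]] = 0$. Summing over $i$ yields $E_Q[K]=0$, and combined with $E[Z]=1$ this places $Z$ in $\mathcal{C}_1^0$.

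The proof is essentially routine and the only subtlety is making sure the integrability hypotheses required for the $Q$-martingale optional sampling are available; this is exactly why Assumption \ref{ass-S} was introduced (the role of $S^*_T \in L^1(P)$ is to ensure true martingale behaviour under every bounded density, as noted immediately after that assumption).
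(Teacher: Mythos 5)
Your proof is correct and follows essentially the same route as the paper: both directions hinge on the same two observations, namely that $\pm\mathbbm{1}_A(S_t-S_s)\in\C$ forces $E_Q[S_t\mid\mathcal{F}_s]=S_s$, and that boundedness of $Z$ together with Assumption \ref{ass-S} gives $S^*_T\in L^1(Q)$ so that $E[ZX]\leq E_Q[\xi\cdot S_T]=0$ for $X\in\C$. The only (harmless) difference is that you establish $E_Q[\xi\cdot S_T]=0$ by an explicit optional-sampling computation on the simple integrand, whereas the paper cites the fact that $\xi\cdot S$ is an $\mathcal{H}^1(Q)$-martingale.
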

  \begin{proof}
   If $Z\in\C_1^0 $, it  is bounded non-negative and integrates to $1$, so   it is a probability density of a $Q\ll P$. Moreover,  $\pm \mathbbm{1}_A(S_t-S_s) \in \C$, for all $A \in \mathcal{F}_s, s<t$,  so that  $ E[Z \mathbbm{1}_A (S_t-S_s)] =0$, which precisely means  $E_Q[ S_t \mid \mathcal{F}_s] =S_s$. Conversely, if $Q$ is a martingale probability for $S$, with bounded density $Z$,  then $$S^*_T \in L^1(P) \subseteq L^1(Q).  $$
     As $S^*_T$ is $Q$-integrable and  $\xi$ is bounded, the integral  $\xi \cdot S$      has  maximal functional $(\xi \cdot S )^*_T \in L^1(Q)$, and is thus  a martingale of class $\mathcal{H}^1(Q)$, see \cite[Chapter IV, Sect 4]{protter}). Now, if $K\in \C$ by definition   it can be super replicated at zero cost: $K\leq \xi \cdot S_T$ for some $\xi$, whence
   $$  E[Z K]= E_Q[ K ] \leq E_Q[ \xi \cdot S_T]=0.  $$
   The above inequality implies $Z\in \C^0$.
  \end{proof}

\begin{theorem}\label{ftap-simple} The following conditions are equivalent:
\begin{itemize}
  \item[a)]the market is  $\lambda$ gain-loss free,
  \item[b)] there exists an (equivalent) martingale probability $Q$  such that
    \begin{equation}\label{bound}
    \frac{\esssup \frac{dQ}{dP}}{\essinf \frac{dQ}{dP}} < \lambda.
    \end{equation}
\end{itemize}
In case any of the two conditions above holds, the market best gain-loss $\alpha^*$ admits a  dual representation as
\begin{equation}\label{dual-alpha}
    \alpha^* = \min_{Q\in \mathcal{M}_{\infty} }  \frac{\esssup \frac{dQ}{dP}}{\essinf \frac{dQ}{dP}}
\end{equation}
 in which $ \mathcal{M}_{\infty}$ is the set of  equivalent martingale probabilities $Q$  with densities $Z \in \C^0_1$ which are (bounded and) bounded away from $0$, i.e.   $\{ Z \in \C^0_1 \mid  Z>c  \text{ for some } c>0 \}$.
\end{theorem}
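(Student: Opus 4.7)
The easy direction (b) $\Rightarrow$ (a) follows from the Fenchel inequality (\ref{fenchel}). Given a martingale density $Z = dQ/dP$ with $\mu := \esssup Z/\essinf Z < \lambda$, set $y := 1/\essinf Z > 0$ so that $yZ \in [1, \mu]$ almost surely, whence $\Vm(yZ) = 0$. Pointwise Fenchel gives $\um(K) \leq (yZ)K + \Vm(yZ) = yZK$ almost surely, so that $E[\um(K)] \leq y\, E[ZK] = 0$ for every $K \in \K$; here the last equality uses Lemma \ref{mart} and the fact that both $K$ and $-K$ lie in $\K \subseteq \C$, forcing $E[ZK] = 0$. This yields $\alpha(K) \leq \mu$ for every nonzero $K \in \K$, so $\alpha^* \leq \mu < \lambda$; the same argument applied to a generic $Z' \in \mathcal{M}_\infty$ gives $\alpha^* \leq \esssup Z'/\essinf Z'$, i.e.\ the inequality ``$\leq$'' in (\ref{dual-alpha}).

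For the converse (a) $\Rightarrow$ (b) and the opposite inequality in the dual formula, the plan is to fix $\mu \geq \alpha^*$ and construct $Z \in \mathcal{M}_\infty$ with $\esssup Z/\essinf Z \leq \mu$ via a minimax duality argument. Two preliminary observations pave the way. First, the pointwise identity $\um(x) = \inf_{y \in [1,\mu]} xy$, combined with the measurable selection $Y(\omega)=1$ on $\{K \geq 0\}$ and $Y(\omega)=\mu$ elsewhere, yields
$$ E[\um(K)] = \min_{Y \in L^\infty,\; 1 \leq Y \leq \mu} E[YK]. $$
Second, the hypothesis $\alpha^* \leq \mu$ translates into $E[\um(K)] \leq 0$ for all $K \in \K$, and this extends to all $K \in \C$ because $K \leq K'$ entails $\um(K) \leq \um(K')$ pointwise.

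The core of the argument is Sion's minimax theorem applied to the bilinear form $(K, Y) \mapsto E[YK]$ on $\C \times \mathcal{Y}$, where $\mathcal{Y} := \{Y \in L^\infty : 1 \leq Y \leq \mu\}$. The set $\mathcal{Y}$ is convex and weak-$*$ compact in $L^\infty = (L^1)^*$ by Banach--Alaoglu (weak-$*$ closedness is a one-line check on indicator test functions); for each $K \in L^1$ the map $Y \mapsto E[YK]$ is weak-$*$ continuous and linear, while for each $Y \in \mathcal{Y}$ the map $K \mapsto E[YK]$ is linear and $L^1$-norm continuous on $\C$. Sion's theorem therefore gives
$$ \sup_{K \in \C}\, E[\um(K)] \;=\; \sup_{K \in \C}\,\inf_{Y \in \mathcal{Y}} E[YK] \;=\; \inf_{Y \in \mathcal{Y}}\,\sup_{K \in \C} E[YK]. $$
Since $\C$ is a convex cone, the inner supremum on the right equals $0$ when $Y \in \C^0$ and $+\infty$ otherwise; the outer infimum being $\leq 0$ then forces the existence of some $Y \in \mathcal{Y} \cap \C^0$. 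Normalising $Z := Y/E[Y]$ produces $Z \in \C^0_1$ with $Z \geq 1/E[Y] > 0$ and $\esssup Z / \essinf Z \leq \mu$, and Lemma \ref{mart} identifies $Z$ as the density of a probability in $\mathcal{M}_\infty$.

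Taking any $\mu \in (\alpha^*, \lambda)$ completes (a) $\Rightarrow$ (b); taking $\mu = \alpha^*$ (permissible since $\alpha^* < \lambda < +\infty$) produces $Z \in \mathcal{M}_\infty$ with $\esssup Z/\essinf Z \leq \alpha^*$, and combined with the easy direction this proves attainment of the minimum in (\ref{dual-alpha}). The main technical point I expect to grind out is the careful verification of Sion's hypotheses in this mixed-topology setting (weak-$*$ compactness of $\mathcal{Y}$, and the appropriate semicontinuity of the bilinear form); once that is secured, the observation that the construction applies directly at the critical value $\mu = \alpha^*$ is precisely what delivers attainment.
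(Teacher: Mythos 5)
Your proof is correct, and the easy direction (b) $\Rightarrow$ (a) together with the inequality $\alpha^* \leq \inf_{\mathcal{M}_\infty}$ is essentially the paper's argument (Fenchel inequality with $Y = Z/\essinf Z$). For the hard direction, however, you take a genuinely different route. The paper fixes $\mu = \alpha^*$, observes that $u_\mu = \sup_{K\in\C} E[\um(K)] = 0$, invokes the Extended Namioka Theorem to get norm-continuity of the finite monotone concave functional $E[\um(\cdot)]$ on $L^1$, and then applies the Fenchel Duality Theorem to obtain $u_\mu = \min_{Y\in\C^0} E[\Vm(Y)]$, the attained dual minimizer $Y^*$ automatically satisfying $1\leq Y^*\leq\mu$. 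You instead exploit the representation $\um(x) = \min_{y\in[1,\mu]} xy$, which turns $E[\um(K)]$ into $\min_{Y\in\mathcal{Y}} E[YK]$ over the order interval $\mathcal{Y}=\{1\leq Y\leq\mu\}$, and apply Sion's minimax theorem using weak-$*$ compactness of $\mathcal{Y}$ (Banach--Alaoglu); the dichotomy $\sup_{K\in\C}E[YK]\in\{0,+\infty\}$ then forces $\mathcal{Y}\cap\C^0\neq\emptyset$ without even needing attainment of the outer infimum. Both arguments deliver the same normalization $Z^*=Y^*/E[Y^*]$ and the same attainment of the minimum in \eqref{dual-alpha} at $\mu=\alpha^*$. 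What your route buys is self-containedness: you avoid the Namioka-type continuity result and the abstract Fenchel duality machinery, at the cost of verifying Sion's hypotheses in the mixed norm/weak-$*$ topology, which you correctly identify as the main point to grind out (and which does go through, since $(K,Y)\mapsto E[YK]$ is separately continuous and linear in each variable on $\C\times\mathcal{Y}$).
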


    The equivalence will be proved   by duality methods  via the  auxiliary utility maximization problem
     $$  u_\mu:= \sup_{K \in \mathcal{K}} E[\um( K)].   $$
       The reason is  that $u_\mu<+\infty$ is equivalent to $ \alpha^*\leq \mu$. In fact,   the functional
      $ E[\um(K) ] =E[K^+ -\mu K^- ] $  is positively homogeneous  so that
     $$ u_\mu  <+\infty    \Leftrightarrow   u_\mu = 0,$$
       and the latter condition  in turn is equivalent to  $\alpha^*\leq \mu$ because $0 \in \K $.   \\
      \indent Before starting the proof, recall also that the Fenchel pointwise inequality  \eqref{fenchel} gives, for any random variable $Y $
     $$ \um(K) - K Y \leq \Vm( Y).   $$
     \begin{proof}[Proof of Theorem \ref{ftap-simple}]
      b) $\Rightarrow$ a) If there exists a $Q$ with the stated properties, its density $Z$ belongs to $\C^0_1$ by Lemma \ref{mart}.  Set $Y =    \frac{Z}{\essinf Z} \in \C^0$ .   As $ 1\leq Y\leq  \frac{\esssup Z}{\essinf Z}:=\mu <\lambda$, $\Vm(Y)=0$ and thus for all  $K$ the Fenchel inequality simply reads as
     $  \um(K) - K Y \leq 0   $.
    Taking expectations, $  E[\um(K) ] \leq 0$ for all $K \in \K$, which is in turn equivalent to  $u_{\mu}= 0$  and to $\alpha^*\leq \mu <\lambda$.  \vspace{0.3cm}\\
     a) $\Rightarrow$ b)      Set $\mu=\alpha^*$. Then $u_\mu = 0$. The existence of a $Q$ is now  a standard duality instance.  Note that $\um$ is monotone, so $ u_\mu = \sup_{K \in \C} E[\um( K)]$. Also, the monotone concave functional $E[\um(\cdot)]$ is finite and thus continuous on $L^1$ by the Extended Namioka Theorem (see \cite{bf-namioka}, \cite{russhap2}).
       Therefore the Fenchel Duality theorem applies (see e.g. \cite[Theorem I.11 ]{bre} or  \cite{bia-eqf} for a survey of duality techniques in the utility maximization problem) and gives the formula
    $$  u_{\mu} = \min_{ Y \in \C^0} E[\Vm(Y)].  $$
    In particular the infimum  in the dual is attained  by a $Y^*  \in \C^0 $.  Therefore $ 1\leq  Y^*\leq \mu=\alpha^*<\lambda $ and its scaling   $Z^* = Y^*/E[Y^*]$ is  a martingale density with the  property  required in \eqref{bound}.\vspace{0.3cm}\\
    \indent Suppose now any of the two conditions above holds true.  Then,   the proof of the arrow b) $\Rightarrow$ a)  actually  shows \begin{equation}\label{aux}
 \alpha^*=\sup_{K\in \K, K \neq 0} \, \frac{E[K^+]}{E[K^-]} \leq \inf_{Q\in \mathcal{M}_{\infty} } \, \frac{\esssup Z}{\essinf Z},
\end{equation}
and   the proof of the arrow  a) $\Rightarrow b)$
shows that the infimum is attained by $Z^*$ and there is no duality gap.
\end{proof}

The next Corollary is essentially a slight rephrasing of the Theorem just proved.  It gives  an alternative expression for the dual representation of $\alpha^*$, which will be generalized in Corollary \ref{claim2}, Section 3.
\begin{corollary} \label{cor-maintheo}
Let $\lambda \in [1, +\infty)$ and let  $Q_{\lambda}\cap \mathcal{M}$ be the (convex) set of  martingale measures with finite $\Vl$-entropy.   The conditions: $\alpha^*<+\infty$ and $Q_{\lambda}\cap \mathcal{M} \neq \emptyset $  for some $\lambda \geq 1$ are equivalent;  and in case  $\alpha^*$ is finite, it admits the representation:
$$ \alpha^* =  \min\{ \lambda\geq 1 \mid Q_{\lambda}\cap \mathcal{M}\neq \emptyset\}$$
 In particular, $\alpha^* =1$ iff $P$ is already a martingale measure.
\end{corollary}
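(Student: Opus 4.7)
The plan is to translate the statement into the dual language of Theorem \ref{ftap-simple}. The key observation, already recorded in the Remark, is that $\Ql$ coincides with the set of $Q\ll P$ whose Radon--Nikodym derivative satisfies $\esssup(dQ/dP)/\essinf(dQ/dP)\leq \lambda$. In particular a finite ratio forces $\essinf dQ/dP>0$, so every measure in $\QlM$ is automatically equivalent to $P$ with bounded density bounded away from zero; equivalently $\QlM\subseteq \mathcal{M}_\infty$ for every finite $\lambda$, and $\mathcal{M}_\infty=\bigcup_{\lambda\geq 1}\QlM$. With this identification in hand, everything reduces to an inspection of \eqref{dual-alpha}.

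For the equivalence I would argue as follows. If $\alpha^*<+\infty$, Theorem \ref{ftap-simple} guarantees that the minimum in \eqref{dual-alpha} is attained by some $Q^*$ whose density ratio equals $\alpha^*$; this $Q^*$ lies in $\mathcal{Q}_{\alpha^*}\cap\mathcal{M}$, which is therefore nonempty (take $\lambda=\alpha^*$). Conversely, if $\QlM$ is nonempty for some $\lambda$, pick $Q$ in it with ratio $\mu\leq\lambda$ and apply the implication b)$\Rightarrow$a) of the theorem with any $\lambda'>\mu$: this yields $\alpha^*<\lambda'$, and letting $\lambda'\downarrow\mu$ gives $\alpha^*\leq\mu\leq\lambda<+\infty$.

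The min-formula then drops out by combining these two steps with \eqref{dual-alpha}: on the one hand $\alpha^*$ belongs to $\{\lambda\geq 1\mid\QlM\neq\emptyset\}$ via the attained minimizer $Q^*$, and on the other hand if $\QlM\neq\emptyset$ then the previous paragraph yields $\alpha^*\leq\lambda$, so no strictly smaller value lies in the set. The ``in particular'' clause is then immediate since $\mathcal{Q}_1=\{P\}$, so $\mathcal{Q}_1\cap\mathcal{M}\neq\emptyset$ iff $P$ is itself a martingale measure for $S$. The only mildly delicate point in the whole argument is the mismatch between the strict inequalities in Theorem \ref{ftap-simple}(a)--(b) and the non-strict one in the definition of $\Ql$; this is handled cleanly precisely because the infimum in \eqref{dual-alpha} is attained, so the bound ``$\leq\alpha^*$'' is realized inside $\mathcal{Q}_{\alpha^*}\cap\mathcal{M}$ and the minimum (not just infimum) in the displayed formula is legitimate.
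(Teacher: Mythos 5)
Your argument is correct and follows essentially the same route as the paper: the paper's proof likewise rests on the identification $\mathcal{M}_{\infty}=\bigcup_{\lambda\geq 1}\mathcal{Q}_{\lambda}\cap\mathcal{M}$, the monotonicity of the family $(\mathcal{Q}_{\lambda}\cap\mathcal{M})_{\lambda\geq 1}$, the fact that $\mathcal{Q}_1\cap\mathcal{M}$ is either empty or $\{P\}$, and the attainment of the minimum in \eqref{dual-alpha} from Theorem \ref{ftap-simple}. You merely spell out in more detail the steps the paper compresses into ``a straightforward consequence of (the proof of) Theorem \ref{ftap-simple}'', including the correct handling of the strict-versus-non-strict inequality via attainment of the dual minimizer.
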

\begin{proof}
 Note that $\mathcal{M}_{\infty} = \cup_{\lambda \geq 1} Q_{\lambda}\cap \mathcal{M}$ and    $(Q_{\lambda}\cap \mathcal{M} )_{\lambda \geq 1}$ is a parametric family non-decreasing in $\lambda$ with $ \mathcal{Q}_1 \cap \mathcal{M}= \{P\}\cap \mathcal{M} $ either empty or equal to $  \{P\}$. The rest of the proof is then  a straightforward consequence of (the proof of) Theorem \ref{ftap-simple}.
\end{proof}

\subsection{Pros and cons of gain-loss ratio}\label{bs-unbounded}

The requirement of gain-loss free market  can thus  be seen as  a  result \`{a}-la  Fundamental Theorem of Asset Pricing also in  general probability spaces.    A comprehensive survey of  No-Arbitrage concepts and results is   the  reference book by Delbaen and Schachermayer \cite{ds-book}.   Compared to those   theorems,   the above proof looks surprisingly easy.  Of course, there is  a (twofold) reason. First, there is an integrability condition on $S$; secondly, and most importantly,  the assumption of  $\lambda$ gain-loss free market  is much  stronger than absence of arbitrage (or absence of free lunch with vanishing risk).  \\
\indent  The stronger requirement of absence of $\lambda$ gain-loss arbitrage allows a straightforward reformulation    in terms of a  standard   utility maximization  problem.  This  reformulation as such is not possible for the general FTAP case.  The reader is  however referred to \cite{rog} for a proof of the FTAP in discrete time based on a technique which relies in part on the ideas of utility maximization.

   In discrete time  trading  there is a full characterization of the models which have finite best gain-loss ratio.   On one side,   the Dalang-Morton-Willinger Theorem ensures that under No Arbitrage condition there always exists a bounded pricing kernel. Such a kernel is  not necessarily bounded away from $0$. On the other side, the  characterization of arbitrage free  markets which admit  pricing kernels satisfying prescribed lower  bounds is provided by  \cite{rs}.  \\
   \indent   In continuous time there is no such a characterization, and   $\alpha^* $ is very likely to be infinite in common models,  see  Example \ref{bs-model} for an illustration in the Black-Scholes model.
And even if it is finite,     the supremum may not be attained.  This is not due    to our specific assumptions, i.e. restriction to simple strategies in $\Xi$. In general the market best gain-loss is  intrinsically not attained, due to  the nature of the functional considered.  As it is scale invariant, maximizing sequences can be selected  without loss of generality of unitary  $L^1$-norm.  But  the unit sphere  in $L^1$ is  not (weakly) compact, unless $L^1$ is finite dimensional or, equivalently, unless $\Omega$ is finite.  So, when $\Omega$ is infinite  maximizing sequences may fail to converge, as shown in     Example \ref{sup-max}  in a one period market. \\
 \indent  Of course,  an enlargement of strategies would certainly help in capturing optimizers in some  specific model. But  given the intrinsic problems  of  gain-loss optimization,  in the end we choose to work  with simple, bounded strategies, as they have a clear financial meaning  and allow for a plain mathematical treatment.

\subsection{Examples}
\begin{example}\label{cauchy} \emph{A model where no gain  has well-defined gain-loss ratio.}  When Assumption \ref{ass-S} does not hold, gain-loss ratio criterion may lose significance.  Suppose $S$ consists of  only of one  jump  which occurs at time $T$. So, $S_t= 0$ up to time $T-$, while $S_T$ has the  distribution of the jump size.  If the filtration is the natural one, then a strategy is simply a real constant $\xi=c$ and   terminal wealths $K$ are of the form $ K = c S_T$.  Suppose the jump  has a symmetric distribution with infinite first moment. Although this is an arbitrage free model,   if $c\neq 0$  both $E[K^+]$ and $E[K^-]$ are infinite.
\end{example}

\begin{example}\label{bs-model} \emph{Gain-loss ratio is infinite in a Black-Scholes world}.
In the  Black-Scholes market model,  the density of the unique pricing kernel  is
   $$  Z= (Z_T=) \exp( - \pi W_T - \frac{\pi^2 T}{2})  $$
in which $W_T$ stands for the Brownian motion at terminal date $T$ and $ \pi = \frac{\mu -r}{\sigma}$ is the market price of risk. This density is  both unbounded  and not bounded away from $0$,  so $\C^0 $ is trivial and its basis empty. Therefore, though there is no arbitrage   when $\mu \neq r$  the Black Scholes market is not gain-loss free, for any level $\lambda$:  $\alpha^* = +\infty$.   \\
\indent  Not surprisingly, the idea behind  the construction of  explicit  arbitrarily large gain-loss ratios is playing with  sets where the density $Z$ is either  very small or very large.  The former sets have a low cost if compared to the physical probability of happening,  while the latter  in turn happen with small probability   but have a (comparatively) high cost.    We give examples of both. Without loss of generality, suppose $r=0$ and   fix  $1> \epsilon >0$.  Let $A_{\eps} : = \{ Z <\eps\}$, $p_\eps$ its probability and $X_\eps = \mathbbm{1}_{A_\eps}$, while $B_{\eps} : = \{ Z >\frac{1}{\eps}\}$,  $q_\eps$ its probability and $Y_\eps = \mathbbm{1}_{B_\eps}$.
Some calculations show that $X_\eps$ and $Y_\eps$ are cash-or-nothing digital options on $S_T=S_0 e^{(\mu- \frac{1}{2}\sigma^2)T + \sigma W_T}$, either  of call type with very large strike or of put type with very small strike when $\eps$ goes to zero.
\begin{enumerate}
  \item Let    $c_\eps = E[Z X_\eps]  $ be the cost of $X_\eps$, which is much smaller than $p_\eps$ as  $c_\eps < \eps p_\eps<1$. Since the market is complete $ K_\eps: = X_\eps- c_\eps  $ is a gain. Its gain-loss ratio is then
        $$ \frac{E[K_\eps^+ ] }{E[K_\eps^- ]} = \frac{(1-c_\eps)p_\eps}{c_\eps (1-p_\eps)} > \frac{1-c_\eps}{\eps} >\frac{1}{\eps}-p_\eps$$
        which tends to $+\infty$ as $\eps \downarrow 0$.
  \item  Let   $b_\eps = E[Z Y_\eps]  $ be the cost of $Y_\eps$. Then,     $ 1> b_\eps >  \frac{q_\eps}{\eps}$. As before, $ C_\eps: = Y_\eps- b_\eps  $  and its opposite $K_\eps$ are  gains. The  gain-loss ratio of $K_\eps$ is then
        $$ \frac{E[K_\eps^+ ] }{E[K_\eps^- ]} = \frac{ b_\eps (1-q_\eps)}{ (1-b_\eps)q_\eps} > \frac{1- q_\eps}{\eps}   $$
        which also tends to $+\infty$ as $\eps \downarrow 0$.
\end{enumerate}

The two items  together show better why  in a gain-loss free market there must be a pricing kernel bounded above \emph{and} bounded away from $0$.   As a final remark,  the  strategies that lead to the digital terminal gains $X_\eps - c_\eps$ and $Y_\eps -b_\eps$ are \emph{not} bounded.  However stochastic  integration theory, see e.g.  the book by Karatzas and Shreve \cite[Chapter 3]{ks}, ensures  they can be approximated arbitrarily well   by simple bounded strategies  with $L^2$ convergence of the terminal gains, so the approximating strategies are in $\Xi$ and their gain-loss ratio blows up.
\end{example}

\begin{example}[The market best gain-loss ratio may not be attained] \label{sup-max}  Let us consider a one period model consisting of a countable collection of one-step binomial trees, with initial uncertainty on the particular  binomial fork  we are in.  The idea is to set the odds and the (single)  risky underlying so that   the best gain-loss ratio in  the  $n$-th binomial fork  is less than the best gain-loss  in  the subsequent  $(n+1)$-th binomial fork.  This prevents the existence of an optimal solution. \\
  \indent   Suppose then $S_0 =0$, the interest rate $r=0$  and that the probability of  being in  the $n$-th fork  is $\pi_n>0$.  If we are in the $n$-th fork,  $S_1$ can either go up to a constant $c>0$, independent of $n$, or go down to $-(1+\frac{1}{n})$, with  conditional probability of going up $p_n^u$ (and $p^d_n =1-p^u_n$ is the conditional probability of going down), as summed up in the picture below.

\begin{center}
\begin{picture}(50,50)
\put(26,30){\line(4,-1){100}}
\put(26,30){\line(4,1){100}}
\put(140,53){$c$}
\put(140,3){$-(1+\frac{1}{n})$}
\put(50,48){ $ p_n^u  $}
\put(-130,28){ $S$  in  the $n$-th fork }
\put(5,28){$0$}
\end{picture}
\end{center}

Since $S$ is bounded,   Assumption \ref{ass-S} is satisfied;  there is no arbitrage and $\mathcal{M}_{\infty}\neq 0$. In fact, the probability  $Q$  which gives to each fork the same probability as $P$ and gives to $S$  a conditional probability of going up in the $n$-th fork equal to $ q_n^u= \frac{1+1/n }{c+1+1/n}$ is a martingale probability which has density bounded and bounded away from $0$.     Note that  a strategy $\xi$ can be identified with the sequence  $ (\xi_n)_n $ of its  values, chosen at the beginning  of each fork.  Now,  the scale invariance property implies the best gain-loss ratio $\alpha^*_n$ in each fork is given by the best between  a long position in the underlying and a short one:

$$     \alpha^*_n = \max\left ( \frac{c p^u_n}{ (1+1/n) p^d_n},  \frac{ (1+1/n) p^d_n}{ c p^u_n} \right ). $$

If in addition  the parameters $(p^u_n)_{n\geq 1},c$  satisfy
$       \alpha^*_n < \alpha^*_{n+1}$,
then actively trading in the  $n+1$-th fork only, and do nothing in the other forks,  is always better than trading in the first $n$ forks.    To fix the ideas, suppose that in each fork being long in $S$ is better than being short, i.e.  $ \alpha^*_n = \frac{c p^u_n}{ (1+1/n) p^d_n} $. This is satisfied iff $c  \geq (1+1/n) \frac{p^d_n}{p^u_n}$ for all $n\geq 1$. Then, the condition $ \alpha^*_n  < \alpha^*_{n+1}$, for all $n$, becomes
$$  1- \frac{1}
{(n+1)^2} <\frac{p^d_n p^u_{n+1}}{p^u_n p^d_{n+1}}.$$
A simple case when this is verified is when the conditional historical probabilities do not depend on $n$.  So, suppose from now on  that $p^u_n = p^u$ for all $n$ and that $c\geq 2 \frac{p^d}{p^u}$. Then,
\begin{equation}\label{lilli}
 \alpha^*  = \lim_{n\rightarrow + \infty} \alpha^*_n = c  \frac{p^u}{ p^d}
\end{equation}
and for any  strategy $\xi $ such that $ K = \xi \cdot S_1 \in L^1$
$$  \alpha (K) <  \alpha^*$$

This is intuitive from the construction, but can be  verified by (a bit tedious and thus omitted) explicit computations with series.

 As the strategies with integrable terminal gain form the largest conceivable  domain  in gain-loss ratio maximization,  this example also proves that  the  best gain-loss ratio is  intrinsically  not attained. Namely,  it is not a matter of strategy restrictions (boundedness  or other). \\
 \indent From an analytic point of view, let us see what goes wrong.   Define  the sequence of strategies  $\xi^n$:
 $$\xi^n   = \left \{ \begin{array}{cc}
    1 & \text{ if  we are initially  in the $n$-th fork}     \\
    0 & \text{ otherwise.}
                    \end{array} \right. $$
$\xi^n$ is the optimizer in the $n$-th fork, and \eqref{lilli} implies  it is a maximizing sequence for $\alpha^*$.  The maximizing gains $k^n = \xi^n \cdot S_1$ converge in $L^1$ to $0$, but in $0$ $\alpha$ is not defined.  By   scale invariance,     the normalized version: $$  K^n = \frac{k_n}{E[ |k_n |]} $$
is still maximizing, but  is not uniformly integrable and thus has no limit.\\
   \indent We finally remark that   a $Q \in \mathcal{M}_{\infty}$  in our model exists because  the ratio of the upper value to the lower value of $S_1$ in each fork,  $(S_1)_n^u /(S_1)_n^d$,  remains bounded  and bounded away from zero when $n $ tends to infinity.   A simple modification, with e.g. $(S_1)^u_n= 1$ and $(S_1)^d_n =-2^{-n}$ as in \cite[Remark 6.5.2]{ds-book},  leads to an arbitrage free market model with no  $Q$ bounded away from zero.
\end{example}

\section{Best gain loss  with a random endowment}
\subsection{The best gain-loss $\alpha^*(B)$ is an acceptability index on $L^1$}
 Suppose the investor at time $T$ has  a non-replicable random endowment  $B\in L^1, B \notin \K$. If she   optimizes over the market in order to reduce her exposure, the best gain-loss in the presence of $B$ will  be  $$\sup_{K\in \K } \alpha(B+K),$$
 which is well defined as $B+K $ never vanishes on $ \K$. This expression  can be re-written as $ \sup_{K\in \K , K+B \neq 0} \alpha(B+K) $, which makes sense also  if $B=0$ or, more generally, if $B\in \K$, and  in that case it coincides with $\alpha^*$. From now on, the value $\alpha^*$ defined in Section 2.1  is denoted by $\alpha^*(0)$.   So, let us  define on  $L^1$ the map
  $$ \alpha^*(B):=  \sup_{K\in \K, B+K \neq 0 } \alpha(B+K).  $$

\begin{lemma} \label{Bineq}
The map  $\alpha^*$ satisfies:
 \begin{enumerate}
  \item  $\alpha^*: L^1 \rightarrow [\alpha^*(0), +\infty] $;
  \item  non-decreasing monotonicity;
  \item quasi concavity, i.e. for any $B_1, B_2 \in L^1$ and for any $c \in [0,1]$:
          \begin{equation} \label{qc}
          \alpha^*(c B_1 +(1-c)B_2) \geq \min(\alpha^*(B_1), \alpha^*(B_2))
          \end{equation}
   \item   scale invariance:
  $ \alpha^*(B)=\alpha^*( cB)  \ \ \forall c>0$
   \item continuity from below, i.e.
        $$ B_n \uparrow B \Rightarrow  \alpha^*(B_n) \uparrow \alpha^*( B). $$
 \end{enumerate}
\end{lemma}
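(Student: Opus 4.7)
The plan is to handle scale invariance, monotonicity and the lower bound as bookkeeping; then reduce quasi concavity to the concavity of the functional $X \mapsto E[X^+-\nu X^-]$; then prove continuity from below by a monotone convergence argument, where the main subtlety is keeping $B_n+K\neq 0$ under control.

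Scale invariance (4) is immediate since $\K$ is a linear cone and $\alpha$ itself is scale invariant: the substitution $K=cK'$ inside the sup defining $\alpha^*(cB)$ identifies it with $\alpha^*(B)$. Monotonicity (2) follows from that of $\alpha$: for $B_1\leq B_2$ and any admissible $K$ in the sup for $B_1$ (i.e.\ with $B_1+K\neq 0$), either $B_2+K\neq 0$ and the inequality $\alpha(B_2+K)\geq \alpha(B_1+K)$ passes to the suprema, or $B_2+K=0$, in which case $B_1+K\leq 0$ forces $\alpha(B_1+K)=0\leq \alpha^*(B_2)$. For the lower bound in (1), fix any nonzero $K\in\K$; by (4), $\alpha(B+cK)=\alpha(B/c+K)$. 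Since $B/c+K\to K$ in $L^1$, and $\alpha$ is $L^1$-continuous on $L^1\setminus\{0\}$ (from $\|Y_n^\pm-Y^\pm\|_1\leq \|Y_n-Y\|_1$ together with the convention $c/0=+\infty$), letting $c\to\infty$ along values avoiding the at most one $c$ with $B+cK\equiv 0$ yields $\alpha^*(B)\geq\alpha(K)$; taking the sup over $K$ gives $\alpha^*(B)\geq\alpha^*(0)$.

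For quasi concavity (3) the key observation is that $f_\nu(X):=E[X^+-\nu X^-]$ is concave on $L^1$ and that for every $\nu\geq 0$ its strict upper level set at $0$ coincides with the upper level set of $\alpha$ at $\nu$: indeed $f_\nu(X)>0$ if and only if $X\neq 0$ and $\alpha(X)>\nu$, the subcase $E[X^-]=0$ being handled via the $+\infty$ convention. Fix $\nu<\min(\alpha^*(B_1),\alpha^*(B_2))$ and choose $K_i\in\K$ with $\alpha(B_i+K_i)>\nu$, i.e.\ $f_\nu(B_i+K_i)>0$. Then $K:=cK_1+(1-c)K_2\in\K$ by linearity of $\K$, and concavity of $f_\nu$ yields $f_\nu\bigl(c(B_1+K_1)+(1-c)(B_2+K_2)\bigr)>0$, i.e.\ $\alpha(cB_1+(1-c)B_2+K)>\nu$; hence $\alpha^*(cB_1+(1-c)B_2)>\nu$, and letting $\nu$ tend to the minimum closes (3).

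Continuity from below (5) is the main technical step and the one I expect to be the principal obstacle. Monotonicity already yields $\lim_n\alpha^*(B_n)\leq\alpha^*(B)$, so I focus on the converse. Fix $K\in\K$ with $B+K\neq 0$; from $B_n+K\uparrow B+K$ pointwise I get $(B_n+K)^+\uparrow(B+K)^+$ and $(B_n+K)^-\downarrow(B+K)^-$, so monotone convergence (for the decreasing part, dominated by $(B_1+K)^-\in L^1$) gives $E[(B_n+K)^+]\uparrow E[(B+K)^+]$ and $E[(B_n+K)^-]\downarrow E[(B+K)^-]$. When $E[(B+K)^-]>0$ the ratio passes to the limit directly; when $E[(B+K)^-]=0$ the numerator converges to $E[(B+K)^+]>0$ and the denominator to $0$, so $\alpha(B_n+K)\to+\infty=\alpha(B+K)$. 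The delicate bookkeeping, which is the real obstacle, is that we must have $B_n+K\neq 0$ to evaluate $\alpha(B_n+K)$; but $B+K\neq 0$ together with the monotone pointwise convergence forces $B_n+K\neq 0$ for all sufficiently large $n$, so $\lim_n\alpha^*(B_n)\geq\alpha(B+K)$, and taking the supremum over $K$ concludes.
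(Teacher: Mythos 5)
Your proof is correct, and items 1, 2, 4 and 5 follow essentially the paper's own route: the paper also obtains the lower bound $\alpha^*(B)\geq\alpha^*(0)$ from $\alpha(B+tK)=\alpha(B/t+K)\to\alpha(K)$ as $t\to\infty$ (via dominated convergence rather than your $L^1$-continuity of $\alpha$ away from the origin, which rests on the same elementary estimate), and its proof of continuity from below is the same monotone-convergence computation, organized as a diagonal extraction along a maximizing sequence $(K_m)$ for $\alpha^*(B)$ instead of your cleaner ``fix $K$, pass to the limit in $n$, then take the supremum over $K$''; your explicit check that $B_n+K\neq 0$ for large $n$ is a point the paper glosses over. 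The genuine difference is in quasi concavity: the paper reduces \eqref{qc} to the quasi concavity of $\alpha$ on variables with positive expectation, which it imports from Cherny and Madan, and must verify that maximizing sequences for $\alpha^*(B_i)$ eventually have positive expectation (using $b>\alpha^*(0)\geq 1$); you instead work directly with the functional $f_\nu(X)=E[X^+-\nu X^-]=E[U_\nu(X)]$ and the identity $\{f_\nu>0\}=\{X\neq 0,\ \alpha(X)>\nu\}$, which is self-contained and ties the lemma to the utilities $U_\lambda$ already used in Section 2. One small repair is needed there: $U_\nu$, hence $f_\nu$, is concave only for $\nu\geq 1$ (for $\nu<1$ it is convex), so you must restrict to $\nu\in[1,\min(\alpha^*(B_1),\alpha^*(B_2)))$. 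This costs nothing, since $\min(\alpha^*(B_1),\alpha^*(B_2))\geq\alpha^*(0)\geq 1$ by item 1, and when that minimum equals $1$ the inequality \eqref{qc} is already contained in item 1; with this caveat your level-set argument closes correctly.
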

\begin{proof}
\begin{enumerate}
  \item  Without loss of generality,  assume $B\notin \K$ and fix  $K \neq 0$. For any $t>0$, $t K \in \K$ and   by the scale invariance property of $\alpha$:
$$  \alpha ( B + tK)= \alpha ( \frac{B}{t} + K). $$
An application of dominated convergence gives   $ \lim_{t \uparrow +\infty } \alpha \left( \frac{B}{t} + K \right) \rightarrow \alpha(K)$ and consequently   $\sup_{t>0}  \alpha(\frac{B}{t}  +K) \geq \alpha(K)$.   So,
$$\alpha^*(B) = \! \sup_{K \in \K} \alpha(B+ K) =\! \sup_{K, t>0}  \alpha(B+ tK ) = \!\sup_K \left ( \!\sup_{t>0}  \alpha(\frac{B}{t}  +K) \!\right )\!\geq \! \sup_{K\neq 0} \alpha(K) = \!\alpha^*(0).$$

  \item Non-decreasing monotonicity is  a consequence of the monotonicity of $\alpha$.
  \item Quasi concavity is equivalent to  convexity of  the upper level sets  $A_b : = \{ B \in L^1 \mid \alpha^*(B)>b\}$  for any fixed $b>\alpha^*(0) = \min_B \alpha^*(B)$.  Pick $B_1, B_2 \in A_b$.  By Corollary \ref{cor-maintheo}, $ \alpha^*(0)\geq 1$, and since $b> \alpha^*(0)\geq 1$  we can assume that any maximizing sequence $K^i_n$ for $\alpha^*(B_i), i=1,2$ satisfies $ \alpha(B_i + K^i_n)>1$, or, equivalently, $B_i + K^i_n$ has positive expectation for all $n\geq 0$ and $i=1,2$.   It can be easily checked that    $\alpha$ is quasi concave when restricted to variables with positive expectation (we refer  to \cite{cm} for a proof).  Therefore, for any fixed  $c\in [0,1]$, if $W_n: = c B_1 +(1-c)B_2 + c K^1_n +(1-c)K^2_n$ we have
         $$  \alpha (W_n)  \geq \min ( \alpha(  B_1 + K^1_n), \alpha(  B_2 + K^2_n) ) $$
       and
         $\alpha^*( c B_1 + (1-c)B_2 ) \geq  \alpha( W_n)$ for all $n$.  Letting    $n\rightarrow +\infty$,   $\alpha^*( c B_1 + (1-c)B_2 )\geq \min ( \alpha^*(  B_1 ), \alpha^*(  B_2)  )>b$ and thus $c B_1 + (1-c)B_2  \in A_b $.
  \item The scale invariance property easily follows from the scale invariance of $\alpha$ and the cone property of $\K$.
   \item  Suppose  $B_n \uparrow B$. Select a maximizing sequence $(K_m)_m \in \K$ for $\alpha^*(B)$:
       $$  \alpha(B+K_m) \uparrow \alpha^*(B).$$
         For any fixed $m$, $ B_n +K_m \uparrow B +K_m$ and continuity from below of the expectation of positive and negative part implies the existence of  $n_m$ such that  $ \alpha(B_{n_m} +K_m) \geq  \alpha(B+K_m) - \frac{1}{m} $.  By the monotonicity property of $\alpha^*$:
       $$ \alpha^*(B) \geq \lim_n \alpha^*(B_n) \geq \alpha^*(B_{n_m})\geq  \alpha(B_{n_m} +K_m) \geq  \alpha(B+K_m) - \frac{1}{m} $$
       and, passing to the limit on $m$,  we get  $\alpha^*(B)  = \lim_n \alpha^*(B_n)  $.
\end{enumerate}

\end{proof}

The above lemma shows  that $\alpha^*$ is an acceptability index  continuous from below,   in the sense of Biagini and Bion-Nadal \cite{bb}.   Acceptability indexes were  axiomatically introduced by  Cherny and Madan \cite{cm}, as   maps $\beta$ defined on bounded variables with the properties:
\begin{enumerate}
  \item non-negativity
  \item non-decreasing monotonicity
  \item  quasi concavity
  \item  scale invariance
  \item  continuity from above: $B_n \downarrow B \Rightarrow \beta(B_n) \downarrow \beta (B).$
\end{enumerate}
Biagini and Bion-Nadal   extend  the analysis  of performance measures {beyond} bounded variables and in a dynamic context. In particular, here  the \emph{continuity from below}  property replaces  continuity from above.
  This   non-trivial point  is the key to the extension of the concept of acceptability indexes beyond bounded variables and  solves the value-at $0$ puzzle for indexes. In fact, continuity from above for an index,  which is $+\infty$-valued on  positive random variables (as the gain-loss ratio $\alpha$ and the optimized $\alpha^*$) implies the index should be $+\infty$-valued also at $0$. This is awkward for any index, but in particular    the best  gain-loss  index  $\alpha^*$ loses meaning if we redefine  it to be $+\infty$ at $0$ only for the sake of the (wrong) continuity requirement.

\subsection{The dual representation of $\alpha^*(B)$}
There is a  natural generalization of the results in   Theorem \ref{ftap-simple} in the presence of a claim. First, we need  an  auxiliary result.
\begin{lemma}\label{max-seq}
Fix   $B\in L^1$ and suppose $\alpha^*(B)>\alpha^*(0)$.  Then,  any maximizing sequence  $(K_n)_n$ for $\alpha^*(B)$ is bounded in $L^1$.
\end{lemma}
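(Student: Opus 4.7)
The plan is to argue by contradiction, exploiting the scale invariance of $\alpha$. Suppose the sequence $(K_n)_n$ is not bounded in $L^1$; passing to a subsequence I may assume $t_n := \|K_n\|_1 \to +\infty$. Set $\tilde K_n := K_n/t_n$: since $\K$ is a linear space, in particular a cone, $\tilde K_n \in \K$ and $\|\tilde K_n\|_1 = 1$. The target is to show $\limsup_n \alpha(B+K_n) \leq \alpha^*(0)$, which would contradict $\alpha(B+K_n) \to \alpha^*(B) > \alpha^*(0)$.

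First I would observe that the strict inequality $\alpha^*(B) > \alpha^*(0)$, combined with Lemma \ref{Bineq}(1), forces $\alpha^*(0) < +\infty$. By scale invariance of $\alpha$,
\[ \alpha(B+K_n) = \alpha\bigl(B/t_n + \tilde K_n\bigr), \]
and by the very definition of $\alpha^*(0)$ one has $\alpha(\tilde K_n) \leq \alpha^*(0)$. Combined with the identity $E[\tilde K_n^+] + E[\tilde K_n^-] = \|\tilde K_n\|_1 = 1$, this forces the uniform lower bound $E[\tilde K_n^-] \geq 1/(1+\alpha^*(0)) > 0$. The pointwise Lipschitz inequality $|x^\pm - y^\pm| \leq |x-y|$ then yields
\[ \Bigl| E\bigl[(B/t_n + \tilde K_n)^{\pm}\bigr] - E[\tilde K_n^{\pm}] \Bigr| \leq \|B\|_1/t_n =: c_n \downarrow 0, \]
so that for $n$ large enough that $c_n < E[\tilde K_n^-]$,
\[ \alpha(B+K_n) \leq \frac{E[\tilde K_n^+] + c_n}{E[\tilde K_n^-] - c_n} = \alpha(\tilde K_n) + \frac{c_n}{E[\tilde K_n^-]\bigl(E[\tilde K_n^-] - c_n\bigr)} \leq \alpha^*(0) + \eta_n, \]
where the error $\eta_n \to 0$ thanks to the uniform lower bound on $E[\tilde K_n^-]$.

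Taking $\limsup$ delivers $\limsup_n \alpha(B+K_n) \leq \alpha^*(0)$, contradicting the maximizing property of $(K_n)$. The only real subtlety — and what I regard as the main (rather mild) obstacle — is securing the uniform positive lower bound on $E[\tilde K_n^-]$: without the finiteness of $\alpha^*(0)$ the perturbation $\eta_n$ need not vanish and the whole rescaling scheme would collapse. Once that bound is in hand, everything else is routine bookkeeping with the $L^1$-contraction of $x \mapsto x^{\pm}$.
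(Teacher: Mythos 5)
Your proof is correct and follows essentially the same route as the paper's: argue by contradiction, normalize $K_n$ by its $L^1$-norm, use scale invariance to write $\alpha(B+K_n)=\alpha(B/t_n+\tilde K_n)$, and let $B/t_n\to 0$ to force $\alpha^*(B)\le\alpha^*(0)$. The only difference is that you make explicit the uniform lower bound $E[\tilde K_n^-]\ge 1/(1+\alpha^*(0))$ needed to pass to the limit in the ratio, a step the paper's proof leaves implicit.
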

\begin{proof}
 Select a maximizing sequence for $\alpha^*(B)$,
$ K_n \in \K,  \alpha(B + K_n) \uparrow  \alpha^*(B)$.
Let $(c_n)_n$ denote the corresponding sequence of $L^1$-
 norms, i.e. $c_n = E[|K_n|]$. If $(c_n)_n$ were unbounded, by passing to  a subsequence, still denoted in the same way,  we could assume $c_n \uparrow +\infty$. Let $ k_n = \frac{K_n}{c_n} $. The scale invariance property of $\alpha$ would imply
$$  \alpha(B + K_n) = \frac{E[(B+ K_n)^+]}{E[(B+K_n)^-]}  =  \frac{E[(  \frac{B}{c_n}+ k_n)^+]}{E[( \frac{B}{c_n}+ k_n)^-]}    $$
Since $ \frac{B}{c_n} \rightarrow 0$ in $L^1$, then    $  \alpha^*(B)= \lim_n \alpha(B + K_n) = \lim_n \frac{E[k_n^+]}{E[k_n^-]}$, whence we would get the contradiction $\alpha^*(B)  \leq \alpha^*(0) $.
\end{proof}

\begin{theorem} \label{claim}
The following conditions are equivalent:
        \begin{itemize}
          \item[i)] $\alpha^*(B)<+\infty$
          \item[ii)]  $E_Q[B]\leq 0$ for some $Q \in \mathcal{M}_{\infty}$.
        \end{itemize}
        If any of the two conditions i), ii)   is satisfied, $\alpha^*$ admits the dual representation
\begin{equation}\label{alpha*}
\alpha^*(B) = \min_{Q \in \mathcal{M}_{\infty}, E_Q[B] \leq 0} \frac{\esssup Z}{\essinf Z},
\end{equation}
which becomes
\begin{equation}\label{alpha*strict}
\alpha^*(B) = \min_{Q \in \mathcal{M}_{\infty}, E_Q[B] = 0} \frac{\esssup Z}{\essinf Z}
\end{equation}
when  $+\infty > \alpha^*(B)>\alpha^*(0)$.
\end{theorem}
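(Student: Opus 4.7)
My plan is to mirror the duality strategy of Theorem \ref{ftap-simple}, with the random endowment $B$ absorbed into a shifted auxiliary utility problem. For $\mu \geq 1$ I set
\begin{equation*}
u_\mu(B) := \sup_{K \in \K} E[\um(B+K)],
\end{equation*}
and note that for any $X \in L^1$, $\alpha(X) \leq \mu$ is equivalent to $E[\um(X)] \leq 0$ (the degenerate case $X = 0$ causing no issue, as both sides vanish). Hence $\alpha^*(B) \leq \mu$ if and only if $u_\mu(B) \leq 0$, which converts the quasi-concave problem defining $\alpha^*(B)$ into a concave utility maximization with a linear shift.

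Next I would derive a dual representation of $u_\mu(B)$. Since $|\um(x)| \leq \mu|x|$, the map $K \mapsto E[\um(B+K)]$ is finite, concave and monotone on $L^1$, hence continuous by the Extended Namioka theorem, and monotonicity lets me replace the supremum over $\K$ by that over $\C$. Applying the Fenchel Duality theorem as in the proof of Theorem \ref{ftap-simple}, and computing the Fenchel conjugate of the shifted functional through $X = B+K$ together with the pointwise inequality $\um(x) - xy \leq \Vm(y)$, I expect to obtain
\begin{equation*}
u_\mu(B) = \min_{Y \in \C^0,\; 1 \leq Y \leq \mu} E[BY],
\end{equation*}
where the effective domain $\{1 \leq Y \leq \mu\}$ is enforced by finiteness of $E[\Vm(Y)]$. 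The equivalence of (i) and (ii) together with the representation \eqref{alpha*} then follow via the correspondence $Z = Y/E[Y] \leftrightarrow Y = Z/\essinf Z$ and Lemma \ref{mart}: if $Q \in \mathcal{M}_\infty$ has $E_Q[B] \leq 0$, then with $\mu = \esssup Z/\essinf Z$ the variable $Y = Z/\essinf Z$ witnesses $u_\mu(B) \leq 0$, so $\alpha^*(B) \leq \mu$; conversely, if $\alpha^*(B) < +\infty$ the dual minimizer at $\mu = \alpha^*(B)$ normalizes to some $Q^* \in \mathcal{M}_\infty$ with $E_{Q^*}[B] \leq 0$ and ratio equal to $\alpha^*(B)$, giving both (ii) and attainment in \eqref{alpha*}.

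For the sharper formula \eqref{alpha*strict} under the hypothesis $\alpha^*(B) > \alpha^*(0)$, I would use a perturbation argument. Let $Q^*$ attain the minimum in \eqref{alpha*} with ratio $\mu^* = \alpha^*(B)$, and assume for contradiction that $E_{Q^*}[B] < 0$. Theorem \ref{ftap-simple} provides some $Q_0 \in \mathcal{M}_\infty$ attaining $\alpha^*(0)$ with ratio strictly below $\mu^*$; because of \eqref{alpha*}, necessarily $E_{Q_0}[B] > 0$, otherwise one would get $\alpha^*(B) \leq \alpha^*(0)$. Then $Q_t := (1-t)Q^* + t Q_0$ is still a martingale measure, $t \mapsto E_{Q_t}[B]$ is continuous and changes sign, so some $t^* \in (0,1)$ yields $E_{Q_{t^*}}[B] = 0$. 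A direct $\esssup$/$\essinf$ computation, using $\esssup Z^* = \mu^* \essinf Z^*$ and the strict inequality $\esssup Z_0 < \mu^* \essinf Z_0$, produces $\esssup Z_{t^*} / \essinf Z_{t^*} < \mu^*$, contradicting the minimality of $\mu^*$. Therefore every minimizer $Q^*$ must satisfy $E_{Q^*}[B] = 0$, upgrading \eqref{alpha*} to \eqref{alpha*strict}.

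I expect the main obstacle to lie in the duality step: carefully handling the affine rather than conic primal problem and verifying that the shifted Fenchel conjugate is exactly $E[BY] + E[\Vm(Y)]$, with the minimizer automatically falling in the effective domain $\{1 \leq Y \leq \mu\}$. Once that duality is in place, the equivalence and representation \eqref{alpha*} follow algebraically, and the convex-combination perturbation producing \eqref{alpha*strict} is largely bookkeeping with the essential bounds on the densities.
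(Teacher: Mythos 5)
Your proof of the equivalence i) $\Leftrightarrow$ ii) and of the representation \eqref{alpha*} follows the paper's route essentially verbatim: the reduction $\alpha^*(B)\leq\mu \iff \sup_{K}E[\um(B+K)]\leq 0$, the Fenchel dual $\min_{Y\in\C^0,\,1\leq Y\leq\mu}E[BY]$ (the paper writes it as $\min_{Q\in\C^0_1,\,y\geq 0}\{yE[\tfrac{dQ}{dP}B]+E[V_b(y\tfrac{dQ}{dP})]\}$, which is the same object under the substitution $Y=yZ$), and the normalization $Z=Y/E[Y]$ are all exactly the paper's steps. Where you genuinely diverge is \eqref{alpha*strict}. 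The paper argues on the primal side: it invokes Lemma \ref{max-seq} to obtain an $L^1$-bounded maximizing sequence $(K_n)$, hence a uniform upper bound $L$ on $E[(B+K_n)^-]$, and derives $\alpha(B+K_n)\leq b+y^*E_{Q^*}[B]/L<b$, contradicting maximality. You instead argue entirely on the dual side: you mix a hypothetical minimizer $Q^*$ with $E_{Q^*}[B]<0$ with the dual optimizer $Q_0$ of $\alpha^*(0)$ (whose existence Theorem \ref{ftap-simple} guarantees, since $\alpha^*(0)\leq\alpha^*(B)<+\infty$), locate $t^*$ with $E_{Q_{t^*}}[B]=0$, and check that the density ratio of the mixture drops strictly below $\mu^*$. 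That computation is sound: $\esssup Z_t\leq(1-t)\esssup Z^*+t\esssup Z_0<\mu^*\bigl((1-t)\essinf Z^*+t\essinf Z_0\bigr)\leq\mu^*\essinf Z_t$ for $t>0$, using $\esssup Z_0=\alpha^*(0)\essinf Z_0$ with $\alpha^*(0)<\mu^*$ --- which is precisely where the hypothesis $\alpha^*(B)>\alpha^*(0)$ enters, and the mixture stays in $\mathcal{M}_{\infty}$ because $\C^0_1$ is convex and both densities are bounded and bounded away from zero. Your version buys independence from Lemma \ref{max-seq} and keeps the whole argument in the dual variables; the paper's version is a more quantitative primal estimate along a maximizing sequence. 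Both are correct.
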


\begin{proof}[Proof \hspace{0.5cm}]

      \begin{itemize}

      \item[i)$\Rightarrow$  ii)]   Set $b= \alpha^*(B)$. Then $ b \geq \alpha^*(0)\geq 1$.   So,
        $$  0= \alpha^*(B)-b = \sup_{K\in \K } \frac{E[U_b (B+K)]}{E[(B+K)^-]}.  $$
        The denominator is positive, whence the above relation implies
       $  E[U_b (B+K)] \leq 0 $ for all $K$. Therefore $ \sup_K E[U_b (B+K)] \leq 0 $, with possibly strict inequality.  Since this supremum is finite,
           the Fenchel Duality Theorem applies, similarly to Theorem \ref{ftap-simple}, and gives:
        $$ \sup_K E[U_b (B+K)] =\min_{Q \in \mathcal{C}^0_1, y \geq 0}  \{ yE[ \frac{dQ}{dP} B] + E[ V_b ( y \frac{dQ}{dP})] \} \leq 0.$$
      Given the structure of $V_b$,   any couple of minimizers $y^*, Q^*$ satisfies $y^*>0$ and $dQ^*=Z^*dP \in  \mathcal{Q}_b \cap   \mathcal{C}^0_1  =  \mathcal{Q}_b  \cap \mathcal{M} \subseteq \mathcal{M}_{\infty}$,  which is then not empty.     So,  $   E[ V_b ( y^* \frac{dQ^*}{dP})] + y^*E_{Q^*}[B]\leq 0 $ implies $E_{Q^*}[B]\leq 0 $ and ii) follows.
      \item[ii) $\Rightarrow$ i)]   Fix a martingale measure $dQ=Z dP$ with the stated properties,  and let $ y = \frac{1}{\essinf Z} $, $\mu =    \frac{\esssup Z}{\essinf Z} $ so that  $ 1\leq y Z \leq \mu$.   The Fenchel inequality applied to the couple $ \um, \Vm$,    on $B+K$ and   $yZ$ respectively, gives
     $$  \um(B+K) - (K+B) yZ \leq \Vm(yZ)= 0   \  \ \ \forall K \in \K.$$
    Taking expectations, $  E[\um(B+ K) ] \leq yE_Q[B]\leq 0$ for all $K$,  which implies $\alpha^*(B)\leq \mu  $.
      \end{itemize}

The duality formula \eqref{alpha*} has  been implicitly proved in the above lines. In fact,   with the same notations as in the implications  i) $\rightarrow$ ii), we have the relation
$$  \alpha^*(B) \leq \frac{\esssup Z^*}{\essinf Z^*} \leq b $$
where the first inequality follows from  the arrow ii) $\rightarrow$  i), and the second from $Q^* \in \mathcal{Q}_b $.  But since $\alpha^*(B)=b$, the   inequalities  are in fact equalities. \\
\indent To show  the representation \eqref{alpha*strict},  suppose by contradiction that there exists a $B$ such that  $+\infty > \alpha^*(B)>\alpha^*(0)$ and   the minimum in \eqref{alpha*} is attained at a  $Q^*$ with $E_{Q^*}[B]<0$.   Pick a maximizing sequence $(K_n)_n$ for $\alpha^*(B)$, which by  Lemma \ref{max-seq}  is  bounded in $L^1$-norm.  With the same  notations as of the implication i) $\Rightarrow$ ii) above,    we have the inequality:
$$ E[U_b(B+K_n)] \leq y^* E_{Q^*}[B]<0.$$
From this, dividing by $ E[(B+K_n)^-]$ and adding $b$ to both members we derive
  $$  \alpha(B+K_n) = \frac{E[(B+K_n)^+]}{E[(B+K_n)^-]} \leq  b + y^* \frac{E_{Q^*}[B]  }{E[(B +K_n)^-]}  \leq   b + y^* \frac{E_{Q^*}[B]  }{ L} < b = \alpha^*(B)    $$
 where $L$  is a  uniform upper bound for $ E[(B +K_n)^-] $.  Letting $n\uparrow +\infty$, we get the contradiction $  \alpha^*(B)=\lim_n \alpha(B+K_n)   < \alpha^*(B) $.
\end{proof}

\begin{remark}
The  representations \eqref{alpha*} and \eqref{alpha*strict} are interesting \emph{per se}. In fact,     the abstract dual representation of a  quasi concave map is known (Volle, \cite[Theorem 3.4]{volle}), but there are  few examples in which such a dual representation can be explicitly computed. \\
\indent Note also   that if the market is complete and the unique martingale measure $Q^*$  is in  $ \mathcal{M}_{\infty}$,  then $\alpha^*(B) = +\infty $ iff  $E_{Q^*}[B]>0$, and $\alpha^*(B)$ is finite (and equal to $\alpha^*(0)$) if and only if $ E_{Q^*}[B]\leq 0$.
\end{remark}

\begin{corollary}\label{claim2}
With the convention $\sup  \emptyset =  \alpha^*(0)$,  $\alpha^*$   admits  the   representation

 \begin{equation}\label{alpha**}
         \alpha^*(B) = \sup \{ \lambda \geq 1   \mid     E_Q[B]> 0 \, \,   \forall Q \in \Ql \cap \mathcal{M}\}.
 \end{equation}

\end{corollary}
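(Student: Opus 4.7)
The plan is to establish the identity $\Lambda=\alpha^*(B)$, where $\Lambda$ denotes the supremum on the right-hand side, by reading the dual formula \eqref{alpha*} ``sideways'': instead of minimizing the density ratio over those $Q\in\mathcal{M}_\infty$ with $E_Q[B]\leq 0$, I will maximize the threshold $\lambda$ below which no such offending measure lies in $\mathcal{Q}_\lambda\cap\mathcal{M}$. The whole argument reduces to combining Theorem \ref{claim} with the containment $\mathcal{Q}_\lambda\cap\mathcal{M}\subseteq\mathcal{M}_\infty$ and with the observation, drawn from Corollary \ref{cor-maintheo}, that $\mathcal{Q}_\lambda\cap\mathcal{M}=\emptyset$ for every $\lambda<\alpha^*(0)$.

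For the inequality $\Lambda\geq\alpha^*(B)$, I would fix an arbitrary $\lambda\in[1,\alpha^*(B))$ and verify the defining condition at that level. If $\mathcal{Q}_\lambda\cap\mathcal{M}$ is empty (which is automatic for $\lambda<\alpha^*(0)$) the condition holds vacuously. Otherwise, any $Q\in\mathcal{Q}_\lambda\cap\mathcal{M}$ belongs to $\mathcal{M}_\infty$ with density ratio $\leq\lambda$, and $E_Q[B]\leq 0$ would, via \eqref{alpha*}, force $\alpha^*(B)\leq\lambda$, a contradiction; thus $E_Q[B]>0$ and $\lambda$ belongs to the set. The case $\alpha^*(B)=+\infty$ is handled by the same reasoning applied to every $\lambda\geq 1$, invoking the contrapositive of the equivalence i)$\iff$ii) of Theorem \ref{claim}, which yields $\Lambda=+\infty$ directly.

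For the reverse inequality $\Lambda\leq\alpha^*(B)$, the case $\alpha^*(B)=+\infty$ is trivial; when $\alpha^*(B)$ is finite I would exploit the attainment in \eqref{alpha*} to produce a $Q^*\in\mathcal{M}_\infty$ with $E_{Q^*}[B]\leq 0$ and density ratio exactly $\alpha^*(B)$. For any $\lambda>\alpha^*(B)$ this $Q^*$ then sits in $\mathcal{Q}_\lambda\cap\mathcal{M}$ and violates the strict positivity $E_Q[B]>0$, so such a $\lambda$ is excluded from the set and $\alpha^*(B)$ is an upper bound for it.

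The only piece of bookkeeping I expect to require care, and the main obstacle, is the degenerate situation in which the defining set happens to be empty. From the discussion above this can occur only when $\alpha^*(0)=1$ (so $P\in\mathcal{M}$) and $E_P[B]\leq 0$; plugging $Q=P$ into \eqref{alpha*} then forces $\alpha^*(B)=1=\alpha^*(0)$, which is exactly the value dictated by the convention $\sup\emptyset=\alpha^*(0)$ adopted in the statement. With this degenerate case absorbed by the convention, the two one-sided bounds above yield the claimed representation.
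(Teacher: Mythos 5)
Your proposal is correct and follows essentially the same route as the paper: both read the dual formula \eqref{alpha*} of Theorem \ref{claim} ``sideways'', using the inclusion $\Ql\cap\mathcal{M}\subseteq\mathcal{M}_{\infty}$, the monotonicity of $\lambda\mapsto\Ql\cap\mathcal{M}$, and the attainment of the minimum in the finite case (the paper packages the two one-sided bounds as the statement that $\{\lambda\mid E_Q[B]\leq 0 \text{ for some } Q\in\Ql\cap\mathcal{M}\}$ is the interval $[\alpha^*(B),+\infty)$ and the set in \eqref{alpha**} is its complement in $[1,+\infty)$). Your explicit treatment of the degenerate case where the set is empty, which the paper leaves implicit in the convention, is a welcome addition and is argued correctly.
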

\begin{proof}
 With the usual convention $\inf \emptyset = +\infty $, the  proof of Theorem \ref{claim} shows that  $$ \alpha^*(B) = \inf  \{ \lambda   \mid     E_Q[B]\leq  0 \, \,   \text{ for some } Q \in \Ql \cap \mathcal{M}\} $$
 and that  $\alpha^*(B)$ is  finite iff the infimum is a minimum. As  $\Ql \cap \mathcal{M}$ is a set of probabilities which is  non-decreasing in the parameter,   the right hand side of  the above equation is  an  interval $I$, either  $[ \alpha^*(B), +\infty)$ when $\alpha^*(B)$ is finite, or empty when $\alpha^*(B)$ is infinite.
 Since
$$   \{ \lambda\geq 1   \mid     E_Q[B] >  0 \, \,   \forall  Q \in \Ql \cap \mathcal{M}\} $$
 corresponds to the interval $ I^c \cap[1, +\infty)$, its supremum  coincides with    $\alpha^*(B)$ both in the finite and infinite cases.

\end{proof}

\begin{remark}
A general result on acceptability indexes and performance measures is that any such map can be represented in  terms of a one-parameter, non-decreasing family of risk measures  (see \cite{cm,bb}).  In \cite[Theorem 1, Proposition 4]{cm} it is shown  that the gain-loss index  $\alpha$ admits a representation in terms of the family  $(\rho_{\lambda})_{\lambda} $:
$$\rho_{\lambda}(X)  := \sup_{Q\in \Ql } E_Q[-X] $$
The formula \eqref{alpha**} proves   an intuitive fact:  the market optimized  gain-loss index  $\alpha^*$  admits a representation    via the  risk measures  $(\rho_{\lambda}^M)_{\lambda}$ induced by $ (\Ql \cap \mathcal{M})_{\lambda\geq 1}$
$$  \rho_{\lambda}^M(X) := \sup_{\Ql \cap \mathcal{M}} E_Q[-X]$$
where we adopt the convention $\rho_{\lambda}^M =-\infty $ if $\Ql \cap \mathcal{M} = \emptyset $.  The family  $(\rho_{\lambda}^M)_{\lambda}$
consists of the so-called  market  modifications of the collection of risk measures $\rho_{\lambda}(X)  := \sup_{\Ql } E_Q[-X] $. For the concept of market modified risk measure and its relation with hedging, the reader is referred to  \cite{cgm} and \cite[Section 3.1.3]{bar}.
\end{remark}

\subsection{Final comments}    The results just found constitute the basis for a strong objection against best gain-loss ratio as a performance criterion in the presence of an endowment.    To start with,   Lemma \ref{Bineq}   shows that
  possessing a claim whatsoever can never be worse than the case  $B=0$ since $\alpha^*(B)\geq \alpha^*(0)$, which does not make economic sense.  \\
\indent  Second, by  Theorem \ref{claim}  the index $ \alpha^*$ can be of little use in discriminating  payoffs, as   $\alpha^*(B)$ is finite if and only if the claim belongs to $\cup_{Q\in \mathcal{M}_{\infty}} \{ B  \mid E_Q[B] \leq 0 \} $ and we have seen  that $ \mathcal{M}_{\infty}$ is empty in most continuous time models.  \\
\indent Moreover, if  there is a unique pricing kernel, say $P$, then   $\alpha^*(B)=+\infty $  if $E[B]>0$  or if $E[B]<0$ it is optimal to take infinite risk  so to off-set the negative expectation of $B$ and end up with $\alpha^*(B)=\alpha^*(0)=1$, along the same lines of   the proof of item 1 in Lemma \ref{Bineq}.  This is also unreasonable.
 \\
  \indent From a strict mathematical viewpoint,  there is quite a difference from what happens in standard utility maximization.  For example, there   if $P$ is  a martingale measure and $B=m$ is constant, the optimal solution is simply not to invest in the market. This is due to risk aversion and mathematically it is a consequence of Jensen's inequality:
$$ E[U(m + K)]\leq U(m +E[K])= U(m). $$
  On the contrary,  when $m<0$,  $0=\alpha(m) < \alpha^*(m)=1=\alpha^*(0)$.  {The scale invariance property $ \alpha^*(B) = \alpha^*(cB)  $ for all $c>0$ implies
  $$ \alpha^*(B) =   \sup_{c>0} \alpha^*(cB) = \sup_{c >0, K \in \K} \alpha( K +cB). $$
  As a consequence, our optimization problem better compares with the  so-called static/dynamic utility maximization, see e.g. Ilhan {\em et al.} \cite{ijs}, where the optimization is made dynamically in the underlyings and statically in the claim:
  $$  u(B) := \sup_{c >0, K \in \K} E[U( K + cB)] $$
  where  only long positions  are permitted in the claim  so to mirror  the constraint we have for gain-loss.
   When $P$ is a martingale measure and  $B=m<0$  the   value of  the static-dynamic utility maximization   verifies
   $$U(m)< u(m)= u(0)=U(0),$$
  and this  result is exactly  in  the spirit of the equality    $\alpha^*(m)=\alpha^*(0)$ found before.}\\
  \indent  As a final remark, the scale invariance property may be questionable for performance measures in general.  In fact, $\alpha^*$ can be seen as an evaluation  of the whole  half ray generated by $B$, $cB, c>0$, rather than $B$ itself. So, it is desirable only if the (large) investor seeks an information on the ``direction of trade'', as illustrated by Cherny and Madan \cite{cm}, and it is not appropriate for small investors, e.g.  if \emph{quantity} matters.   The cited work  \cite{bb} is entirely dedicated to the definition of a good notion of performance measures,  in an intertemporal   setting.

{\footnotesize

}

 \end{document}